\documentclass[conference,status=final]{IEEEtran}

\usepackage{graphicx}
\usepackage{url}
\usepackage{hyperref}
\usepackage{acro}

\usepackage{amsmath}
\usepackage{amssymb}
\usepackage{amsthm}
\usepackage[greek,english]{babel}
\usepackage{stmaryrd}
\usepackage{eufrak}
\usepackage{pifont}
\usepackage{array}
\usepackage{alltt}
\usepackage{bold-extra}
\usepackage{graphicx}
\usepackage{bussproofs}
\usepackage{xspace}
\usepackage{bm}
\usepackage{relsize}
\usepackage{url}
\usepackage{xcolor}
\usepackage{ifthen}

\newtheorem{theorem}{Theorem}[section]%

\theoremstyle{definition}
\newtheorem{definition}[theorem]{Definition}%

\newcommand{\conditional}[3]{#1 \mathop{\hbox{\raisebox{0ex}{\larger$\leftslice$}}} #2 \mathop{\hbox{\raisebox{0ex}{\larger$\rightslice$}}} #3}%
\newcommand{\hoaretriple}[3]{\left\{ #1 \middle\}\, #2 \,\middle\{ #3 \right\}} %

\renewcommand{\vec}[1]{\bm{#1}}
\newcommand{\ssp}[2][]{\ifthenelse{\equal{#1}{}}{}{#1 +} \left[#2\right]}

\hypersetup{
    colorlinks,
    linkcolor={red!50!black},
    citecolor={blue!50!black},
    urlcolor={blue!80!black}
}

\newcommand{\isaref}[1]{\href{#1}{\includegraphics[width=9pt]{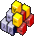}}}
\newcommand{\isalink}[1]{\hfill \isaref{#1}}
\newcommand{\norm}[1]{\left\lVert#1\right\rVert}

\def\land{\mathrel{\wedge}}

\usepackage{subfig}
\usepackage{tikz}
\usetikzlibrary{calc,automata}
\usepackage{cleveref}
\DeclareAcronym{AAIP}{short = AAIP, long = Assuring Autonomy International Programme}
\DeclareAcronym{AMV}{short = AMV, long = Autonomous Marine Vehicle}
\DeclareAcronym{AP}{short = AP, long = autopilot}
\DeclareAcronym{CAM}{short = CAM, long = Collision Avoidance Mode}
\DeclareAcronym{HCM}{short = HCM, long = High Caution Mode}
\DeclareAcronym{LRE}{short = LRE, long = Last Response Engine}
\DeclareAcronym{MOM}{short = MOM, long = Main Operating Mode}
\DeclareAcronym{OCM}{short = MOM, long = Operator Control Mode}
\DeclareAcronym{oCC}{short = oCC, long = {on-Collision-Course}}
\DeclareAcronym{nO}{short = nO, long = {near-Obstacle}}
\DeclareAcronym{CAS}{short = CAS, long = computer algebra systems}
\DeclareAcronym{NC}{short = NC, long = numerical computation}

\newcommand{\dL}{$\textsf{d}\mathcal{L}$\xspace}
\newcommand{\dH}{$\textsf{d}\mathcal{H}$\xspace}

\newcommand{\isabelledH}{Isabelle/\dH}

\begin{document}
\title{Towards Deductive Verification of Control \\ Algorithms for Autonomous Marine Vehicles}
\author{
  \author{\IEEEauthorblockN{
      Simon Foster\IEEEauthorrefmark{2},
      Mario Gleirscher\IEEEauthorrefmark{1}\IEEEauthorrefmark{2},
      Radu Calinescu\IEEEauthorrefmark{1}\IEEEauthorrefmark{2}}
    \IEEEauthorblockA{\IEEEauthorrefmark{2}\textit{Department of
        Computer Science, University of York}, York, UK}
    \IEEEauthorblockA{\IEEEauthorrefmark{1}\textit{Assuring Autonomy
        International Programme, University of York}, York, UK\\ 
      simon.foster,mario.gleirscher,radu.calinescu@york.ac.uk}
  }}

\maketitle  

\begin{abstract}
The use of autonomous vehicles in real-world applications is often precluded by the difficulty of providing safety guarantees for their complex controllers. The simulation-based testing of these controllers cannot deliver sufficient safety guarantees, and the use of formal verification is very challenging due to the hybrid nature of the autonomous vehicles. 
Our work-in-progress paper introduces a formal verification approach that addresses this challenge by integrating the \acl{NC} of such a system (in GNU/Octave) with its hybrid system verification by means of a proof assistant (Isabelle). To show the effectiveness of our approach, we use it to verify differential invariants of an \acl{AMV} with a controller switching between multiple modes.
\end{abstract}

\begin{IEEEkeywords}
theorem proving, dynamical systems, autonomous vehicles, control systems, assurance cases
\end{IEEEkeywords}

\section{Introduction}
\label{sec:introduction}

Engineering controllers for autonomous 
vehicles requires a range of models, e.g.~of the dynamics and
of the control algorithms, for validating and verifying their key properties~\cite{Luckcuck2018,Gleirscher2018-NewOpportunitiesIntegrated}.  \Acl{NC}~(\acs{NC}, e.g. with MATLAB) 
is a widely used simulation technique for model validation~(i.e.~closing the reality gap) and controller testing.  However, simulation is, like testing, mostly limited to the
demonstration of defects, since it can only consider a small fraction of the input space. 
For correctness, particularly to assess safety, full coverage of this space is desirable or mandatory.  %
For hybrid systems, full coverage can be achieved only using symbolic reasoning techniques, such as deductive verification~\cite{Platzer2008}, due to the uncountable state space. We therefore need the translation of a validated model into a form amenable to verification in a proof environment such as Isabelle/HOL~\cite{Isabelle}.

In this work, we investigate this translation for the case of a hybrid
model of an \acf{AMV} and the formal verification of its safety properties.
We describe the dynamics of the vehicle's motion, and controllers for
waypoint approach and obstacle avoidance.  We model the controller
using hybrid state charts,
including the mode switching for
mitigating accidents between the operator and the safety
controller.  We simulate our model in the \ac{NC} tool GNU/Octave,\footnote{GNU/Octave. \url{http://octave.sourceforge.io/}} for the purpose of
validation against real-world trials, and translate this into an
implementation of differential Dynamic Logic~\cite{Platzer2008,Foster2020-dL}~(\dL) in Isabelle/HOL for
deductive verification. To support this, we extend it to support matrices, discrete state, and a form of modular verification.

Our preliminary work serves as a template for how a translation from an \ac{NC} tool to Isabelle can be achieved, and provides additional evidence that Isabelle provides a credible and flexible solution for hybrid systems verification.
Our work is inspired by Mitsch et al.~\cite{Mitsch2017Obstacle} who provide a generic verified model for collision avoidance in KeYmaera~X~\cite{KeYmaeraX}. 
We advance their work through provision of explicit support for transcendental functions in the system dynamics, a higher-level notation in our tool that bridges the semantic gap with control engineers, and access to Isabelle's automated proof facilities.

After an overview of the technologies we use in
\Cref{sec:preliminaries}, we present our approach to validation-based
formal verification in \Cref{sec:approach} and close with a discussion
in \Cref{sec:conclusion}.

\section{Background}
\label{sec:preliminaries}

Isabelle/HOL~\cite{Isabelle} is a proof assistant for Higher Order Logic (HOL). It includes a functional specification language and an array of proof facilities, including \textit{sledgehammer} ~\cite{Blanchette2011}, which integrates automated provers, such as Z3. Isabelle is highly extensible, and has a variety of mathematical libraries, notably for Multivariate Analysis~\cite{Harrison2005-Euclidean} and Ordinary Differential Equations~\cite{Immler2012,Immler2014} (ODEs), which provide the foundations for verification of hybrid systems. 

Isabelle/UTP~\cite{Foster2020-IsabelleUTP} is a semantic framework based on Hoare and He's Unifying Theories of Programming (UTP)~\cite{Hoare&98}, built on Isabelle/HOL. It supports diverse semantic models in a variety of paradigms, such as reactive, concurrent, and hybrid systems, and their application to verification. For example, it contains a tactic, \textit{hoare-auto}, that automates verification of sequential programs using Hoare logic that by utilising \textit{sledgehammer} to discharge verification conditions.

\dL is a logic for deductive verification of hybrid systems, which is supported by the KeYmaera X tool~\cite{KeYmaeraX}. \dL includes a hybrid program modelling language, and a verification calculus based on dynamic logic. It can be used to prove invariants both of control algorithms and continuous dynamics, which makes it ideal for verifying hybrid systems. It avoids the need for explicit solutions to differential equations, by using a technique called \emph{differential induction}. Recently, differential induction has been embedded into Isabelle~\cite{Munive2018-DDL} and Isabelle/UTP~\cite{Foster2020-dL} to create differential Hoare logic (\dH), which also supports verification of hybrid programs, but in a more general setting. In this paper, we integrate \dH into Isabelle/UTP, and extend it.

\section{Approach}
\label{sec:approach}

Our case study, the C-Worker~5\footnote{C-Worker
  5. \url{https://www.asvglobal.com/product/c-worker-5/}}~(\Cref{fig:cworker5})
is an \ac{AMV} designed to support hydrographic survey work. It
operates in the open sea and so must avoid collisions with both static
and dynamic obstacles, such as rocky outcrops and other vessels.
We consider a safety controller that (1)
avoids collisions with obstacles where possible by taking evasive
maneuvers; and (2) mitigates the effects where
avoidance is impossible. Our industrial partner, D-RisQ\footnote{D-RisQ Software Systems. \url{http://www.drisq.com/}}, is developing a safety controller called the \ac{LRE}~\cite{Foster2020AUV} implementing the above
functionality when the boat is operating autonomously.
For verification, we focus on avoidance of static obstacles. 

\subsection{Modelling the Dynamics and the Controller}
\label{sec:casmodel}

\subsubsection*{Modelling the \ac{AMV} Dynamics}

The dynamical model should be close enough to reality to do \ac{NC} and abstract enough to reduce the complexity of formal
verification to a level appropriate for a credible assurance case~\cite{Gleirscher2019-SEFM}.

\begin{figure}
  \subfloat[\ac{AMV} C-Worker 5]{
    \includegraphics[width=.4\linewidth]{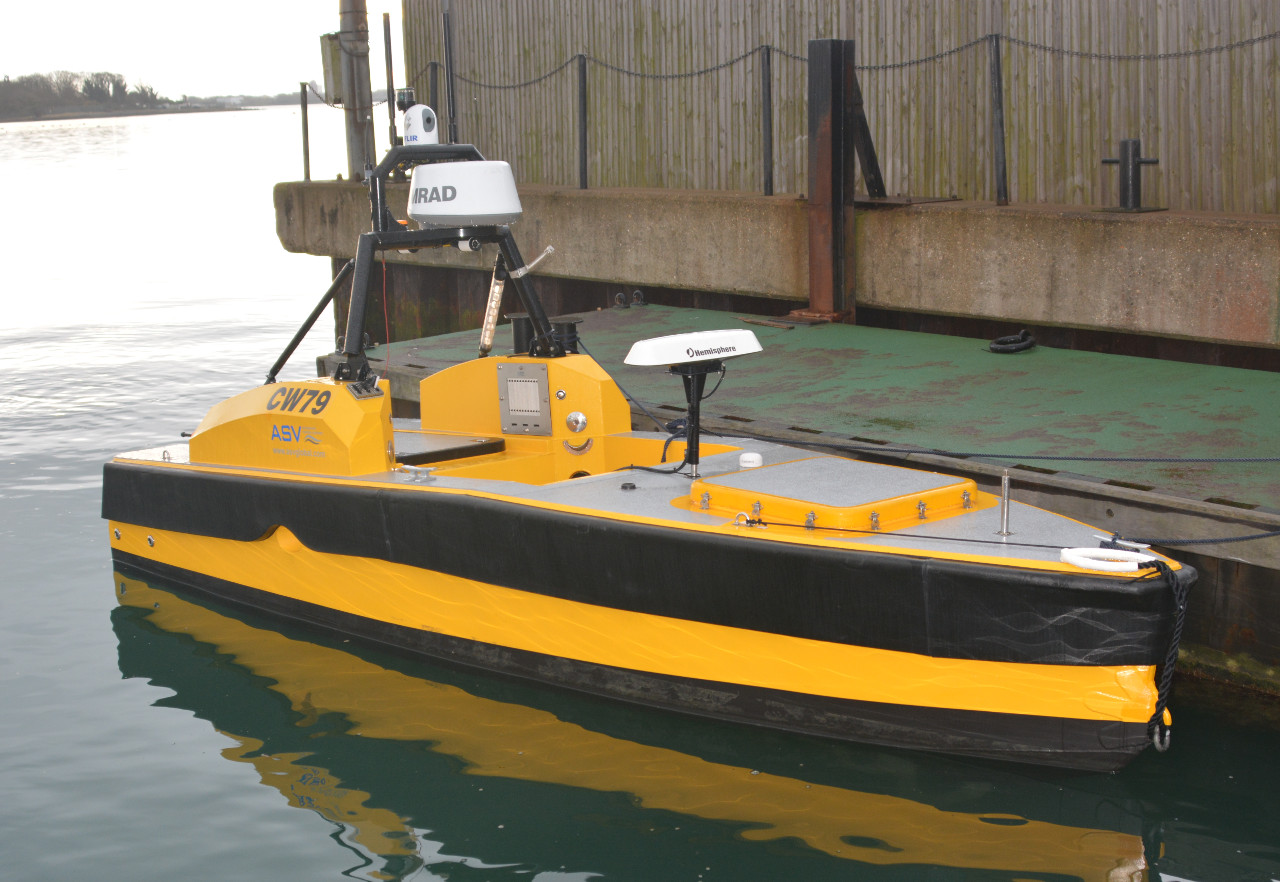}
    \label{fig:cworker5}}
  \hfill
  \subfloat[\ac{AMV} physics]{
\tikzset{
  boat/.pic={
    \draw[thick,blue,fill=blue!10]
    (0,0) -- ++(-1,0) -- ++(0,2) arc (180:90:1cm and 2cm)
    arc (90:0:1cm and 2cm) -- ++(0,-2) -- (0,0);
  },
  vfor/.pic={
    \draw[vec] (0,0) -- (0,2.5)
    node[left,rotate=30] {$\vec{v}$}; 
    \path (0,0) rectangle (1.5,1.5) node (vforcorn) {};
    \draw[vec,color=green!30!black!50] (0,0) -- (0,1.5)
    node[pos=.8,left,rotate=30] {$\vec{f}_l$}; 
    \draw[vec,color=green!30!black!50] (0,0) -- (1.5,0)
    node[pos=.8,below,rotate=30] {$\vec{f}_t$}; 
    \draw[vec,color=green!30!black!50] (0,0) -- (vforcorn)
    node[pos=.8,below,rotate=30] {$\vec{f}$}; 
    \draw (.25,0) arc (0:90:.25);
  }
}
\begin{tikzpicture}
  [every node/.style={transform shape},
  vec/.style={thick,black,-stealth},
  scale=.7,>=stealth]
  \useasboundingbox (-.5,-1.6) rectangle (4.9,2);

  \draw[->,thick] (-1.5,1.2) edge node[left] {y} ++(0,1)
  (-1.5,1.2) -> node[below] {x} ++(1,0);
  \draw[very thick,blue,->] (-.4,-1.5) .. controls (0,1.6) and (2.3,3)
  .. (4.7,-.7) node[pos=.95,sloped,above] {trajectory};
  \draw[dashed,gray] (-.5,0) -- (5,0);
  
  \draw[fill=black] (0,0) circle (2pt) node (A) {};
  \node[anchor=north east] at (A.south west) {A};
  \draw[fill=black] (5,-1) circle (2pt) node (W) {};
  \node[anchor=north east] at (W.south east) {W};
  \draw[fill=black] (3,2) circle (2pt) node (O) {};
  \node[anchor=south west] at (O.south east) {O};
  \draw (0,0) pic[rotate=-30,scale=.4] {boat};
  \draw (0,0) pic[rotate=-30,scale=1] {vfor};
  \draw[right] (1,0) arc (0:60:1) node[pos=.5] {$\phi_A$};

  \draw[vec,gray] (A) edge node[pos=.5] (phiao) {}
  node[pos=.8,above] {$\overline{AO}$} (O);
  \draw[right] (phiao) arc (30:60:1.6) node[pos=.5] {$\phi_{AO}$};
  \draw[vec,gray] (A) edge node[pos=.1] (phiaw) {}
  node[pos=.6,below] {$\overline{AW}$} (W);
  \draw[right] (phiaw) arc (-12:60:.6) node[pos=0,right] {$\phi_{AW}$};
\end{tikzpicture}

     \label{fig:model:physics}}
  \caption{An \ac{AMV} in real and a model of its physics
    \label{fig:amv}}
  \vspace{-2ex}
\end{figure}

Indicated in \Cref{fig:model:physics}, at time $t \in T$, we consider
the velocity $\vec{v}_A = [v_A^x,v_A^y]^T$ and position $\vec{p}_A$ of
the \ac{AMV}, the position $\vec{p}_W$ of a next waypoint to be
approached, and a set $O$ of obstacles, each described by its velocity
$\vec{v}_{O_i}$ and position $\vec{p}_{O_i}$.  $\vec{p}$ and $\vec{v}$
are vectors in planar coordinates $(x,y)$ over
$\mathbb{R}^2$.
These parameters form a state space $\mathcal{X}$ with tuples
\[
  \vec{x} = [\vec{p}_A,\vec{v}_A,\vec{p}_O,\vec{v}_O]^T.
\]
Below, we abbreviate $\vec{p}_E$ by $E$ where $E\in\{A,O,W\}$.  We
also consider parameters calculated from $\vec{x}$, such as the
distance to the next waypoint $\norm{\overline{AW}}$ or the angle
$\phi_{AO}$ between the \ac{AMV} velocity vector and the distance
vector $\overline{AO}$.

For sake of simplicity, we consider the \ac{AMV} as a particle with
mass $m$ and formulate its dynamics as the following system of
ordinary differential equations
\begin{align}
  \label{eq:model:odes}
  \dot{\vec{p}}_A = \vec{v},\quad
    \dot{\vec{v}}_A = \vec{f} / m,\quad
  \dot{\vec{v}}_O = \vec{0},
  \quad\text{and}\quad
    \dot{\vec{p}}_O = \vec{0}
\end{align}
where $\vec{f} / m$ implements Newton's second law of the kinetics of
particle masses relating a force applied to the vehicle and this
vehicle's acceleration at time $t$.  To remain in scope of our
investigation, we further simplify the \ac{AMV} dynamics, 
omitting disturbances~(e.g.~crosswind) and perturbations~(e.g.~flow
resistance), and restricting our analysis to static obstacles.

\subsubsection*{Modelling the \ac{AMV} Controller}

\Cref{fig:model:overall} shows the structure of the plant consisting
of the dynamical model of the \ac{AMV} and its environment~(as
explained before) and a two-layered controller comprising the \ac{AP}
and the \ac{LRE}.

\begin{figure}[t]
  \centering
\begin{tikzpicture}
  [tra/.style={draw,minimum height=2em,align=center},
  sum/.style={circle,draw},
  every node/.style={transform shape},
  spl/.style={circle,black,fill=black},
  lab/.style={align=center},
  scale=.7,>=stealth]
  \node[sum] (lremon) at (0,0) {};
  \node[lab,anchor=south] (lremonl) at (lremon.north) {LRE\\Mon.};
  \node[lab,anchor=north west] at (lremon.south east) {$\mathbb{B}$};
  \node[tra] (lre) at ($(lremon)+(1.5,0)$) {LRE};
  \node[sum] (apmon) at ($(lre)+(1.5,0)$) {};
  \node[tra] (ap) at ($(apmon)+(1.5,0)$) {AP};
  \node[lab,anchor=south] (apmonl) at (apmon.north) {AP\\Mon.};
  \node[lab,anchor=north west] at (apmon.south east) {$-$};
  \node[tra] (dc) at ($(ap)+(1.5,0)$) {D/C};
  \node[tra] (dyn) at ($(dc)+(2,0)$) {AMV Dyn.};
  \draw[spl] ($(dyn)+(1.5,0)$) circle (.2em);
  \node (dynout) at ($(dyn)+(1.5,0)$) {};
  \node[tra] (envsim) at ($(dc)+(-1,1)$) {Env. Sim.};
  \node[tra] (env) at ($(envsim)+(3,0)$) {Env. Dyn.};
  \draw[spl] ($(env)+(2,0)$) circle (.2em);
  \node (envout) at ($(env)+(2,0)$) {};
  \node[tra] (cd) at ($(dyn)+(0,-1)$) {C/D};
  \node[tra] (out) at ($(cd)+(-2,0)$) {Aggregator};
  \draw[thick,->] ($(lremon)+(-1,0)$) -- (lremon);
  \draw[thick,->] (lremon) edge (lre)
  (lre) edge[above] node {$\vec{w}$} (apmon)
  (apmon) edge (ap)
  (ap) edge[above] node {$\vec{f}$} (dc)
  (dc) edge (dyn)
  (cd) edge (out)
  (dynout) edge ($(dynout)+(1,0)$);
  \draw (dyn) -- (dynout);
  \draw[thick,->,above left] (dynout) |- node {$\vec{x}$} (cd);
  \draw[thick,->,below,near start] (out) -| node {$\vec{x,y}$} (apmon);
  \draw[thick,->] (out) -| (lremon);
  \draw[thick,->] ($(envsim)+(-1.5,0)$) -- (envsim);
  \draw[thick,->] (envsim) edge (env);
  \draw[thick] (env) -- (envout);
  \draw[thick,->] (envout) edge ($(envout)+(0.5,0)$);
  \draw[thick,->] (envout) |- (cd);
\end{tikzpicture}

   \caption{Block structure diagram of the dynamical model and the
    two-layered controller with the corresponding monitors
  \label{fig:model:overall}}
  \vspace{-2ex}
\end{figure}
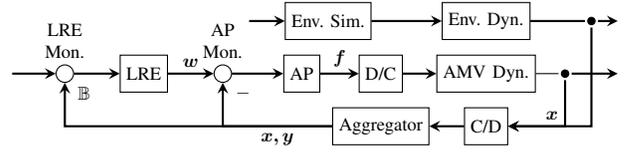

The \emph{discrete low-level control} of the vehicle is facilitated by
the \ac{AP} through generating the propulsive force $\vec{f}$ of the
\ac{AMV} as an input to the \ac{AMV} dynamics.  Within the frame of
reference of the trajectory of the \ac{AMV}, we model the \ac{AMV}'s
single thruster by calculating two components of $\vec{f}$, the
longitudinal (or tangential) acceleration force $\vec{f}_l$ collinear
with the \ac{AMV}'s velocity $\vec{v}$ and the radial acceleration
force $\vec{f}_t$ perpendicular to $\vec{f}_l$, such that
\begin{align*}
  \vec{f} = \vec{f}_l + \vec{f}_t =
  f_l 
  \left[
  \begin{array}{c}
    \cos(\phi_{A})\\
    \sin(\phi_{A})
  \end{array}
  \right]
  + f_t 
  \left[
  \begin{array}{c}
    -\mathrm{sgn}(\phi_{A})\sin(\phi_{A})\\
    \mathrm{sgn}(\phi_{A})\cos(\phi_{A})
  \end{array}
  \right].
\end{align*}
The \emph{discrete high-level control} of the \ac{AMV} is partially
facilitated by the \ac{LRE} through switching between several
operating modes: an \ac{OCM}, a \ac{MOM}, a \ac{HCM}, and a \ac{CAM}. When in OCM, the operator has responsibility for the AMV. When in MOM, the AMV navigates towards the next waypoint at maximum speed. If it gets close to, but not on collision course with, an obstacle then it switches to HCM. If a potential future collision is detected, it transitions to CAM to make evasive maneuvers.
Each of these modes provides the \ac{AP} with a particular
setpoint $\vec{w} = [\mathit{rs}, \vec{p}_W]^T$~(i.e.~target
speed and location of next waypoint) for the calculation of $\vec{f}$
by the \ac{AP} as described by the hybrid automaton in
\Cref{fig:model:ha}
\begin{align}
  \label{eq:model:dyn:flong}
  f_l &= k_p^l \cdot |\mathit{rs} - \norm{\vec{v}_A}|
  \\
  \label{eq:model:dyn:frad}
  f_t &=
  \left\{
  \begin{array}{ll}
    k_p^t \cdot \phi_{AW},
    & \text{in MOM} \\
    k_p^t \cdot \mathrm{sgn}(\phi_{AW}) \cdot f_{max},
    & \text{in CAM/HCM}
  \end{array}
  \right.
\end{align}
In this example, we use a simple proportional controller for $\vec{f}$
with directional proportionality factors $k_p$ as shown in
\Cref{eq:model:dyn:flong,eq:model:dyn:flong}.
For obstacle avoidance manoeuvres, we calculate the \emph{safe braking
  distance} by
\begin{align}
  \label{eq:model:sbd}
  d_{sb} = \frac{sb \cdot \norm{\vec{v}_A}^2 \cdot m}{-2 \cdot k_p^b
  \cdot f_{max}}
\end{align}
with a safety margin $sb$ to capture modelling uncertainty and define
the \emph{\ac{nO}} and \emph{\ac{oCC}} hazards as the predicates
\begin{align}
  \label{eq:model:occ}
  \mathit{nO} \equiv \norm{\overline{AO}} > d_{sb}
  \quad\text{and}\quad
  \mathit{oCC} \equiv \mathit{nO}
  \land |\phi_{AO}| < \epsilon_{\phi}.
\end{align}
In \texttt{MOM}, we add a \emph{hysteresis} $\epsilon_{h}$ to
$\epsilon_{\phi}$ in order to delay manoeuvre cancellation.
\Cref{fig:model:ha} describes the overall behaviour of how the
\ac{LRE} switches between the four modes to provide $\vec{w}$ to the
\ac{AP}.  From the components \ac{LRE} and \ac{AP} shown in
\Cref{fig:model:overall} and from the modes shown in
\Cref{fig:model:ha}, one can then derive the interfaces for the
detailed software design of the \ac{AMV} control system.

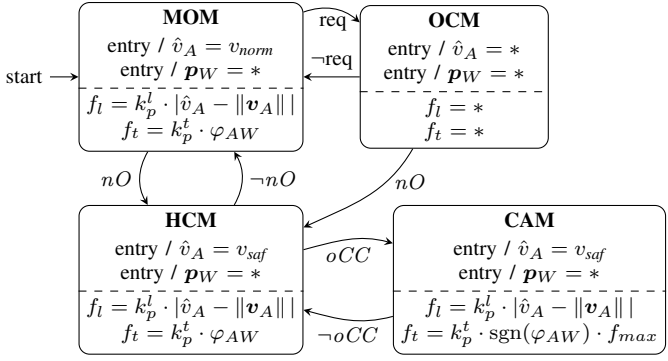
\begin{figure}
  \centering
\footnotesize
\begin{tikzpicture}
  [node distance=3.5cm,
  state/.style={rounded corners,minimum width=2.5cm,minimum height=1.5cm,draw,align=center},
  every node/.style={transform shape},
  scale=1,>=stealth]

  \node[state,initial] (mom) {\textbf{MOM}
    \\[.2em] entry / $\hat{v}_A = v_{\textit{norm}}$
    \\entry / $\vec{p}_W = *$
    \\[.5em] $f_l = k_p^l \cdot |\hat{v}_A - \norm{\vec{v}_A}|$
    \\$f_t = k_p^t \cdot \varphi_{AW}$};
  \node[state] (ocm) [right of=mom] {\textbf{OCM}
    \\[.2em] entry / $\hat{v}_A = *$
    \\entry / $\vec{p}_W = *$
    \\[.5em] $f_l=*$
    \\$f_t=*$
  };
  \node[state] (hcm) at ($(mom)+(0,-2.7cm)$) {\textbf{HCM}
    \\[.2em] entry / $\hat{v}_A = v_{\textit{saf}}$
    \\entry / $\vec{p}_W = *$
    \\[.5em] $f_l = k_p^l \cdot |\hat{v}_A - \norm{\vec{v}_A}|$
    \\$f_t = k_p^t \cdot \varphi_{AW}$
  };
  \node[state] (cam) at ($(hcm)+(4.5,0)$) {\textbf{CAM}
    \\[.2em] entry / $\hat{v}_A = v_{\textit{saf}}$
    \\entry / $\vec{p}_W = *$
    \\[.5em] $f_l = k_p^l \cdot |\hat{v}_A - \norm{\vec{v}_A}|$
    \\$f_t = k_p^t \cdot \mathrm{sgn}(\varphi_{AW}) \cdot f_{max}$
  };
  \draw[dashed] ($(mom.west)+(0,-.15)$) -- ($(mom.east)+(0,-.15)$);
  \draw[dashed] ($(ocm.west)+(0,-.7em)$) -- ($(ocm.east)+(0,-.7em)$);
  \draw[dashed] ($(cam.west)+(0,-.15)$) -- ($(cam.east)+(0,-.15)$);
  \draw[dashed] ($(hcm.west)+(0,-.15)$) -- ($(hcm.east)+(0,-.15)$);
  \draw[->] (mom) edge[bend left,below] node {req} (ocm)
  (ocm) edge[above] node {$\neg$req} (mom)
  (mom) edge[bend right,left] node {$\mathit{nO}$} (hcm)
  (hcm) edge[bend right,right] node {$\neg\mathit{nO}$} (mom)
  (ocm) edge[out=240,in=25,right] node[pos=.3] {$\mathit{nO}$} (hcm)
  (hcm) edge[out=15,in=165,below] node {$\mathit{oCC}$} (cam)
  (cam) edge[out=195,in=-15,below] node {$\neg \mathit{oCC}$} (hcm);
\end{tikzpicture}

  \caption{Behaviour of the \ac{LRE} and \ac{AP} as a Moore machine
    $*$\dots non-deterministic assignment by the operator
    \label{fig:model:ha}}
  \vspace{-2ex}
\end{figure}

\subsubsection*{Note on Abstraction}

The transition from the discrete \ac{LRE} and \ac{AP} to the
continuous \ac{AMV} physics is accomplished by a conversion of
(D)iscretely timed $\vec{f}$ inputs in form of (C)ontinuous,
piece-wise constant signals, processed by actuators.  Vice versa, the
digital controller~(particularly, the Aggregator in
\Cref{fig:model:overall}) samples the environment through sensors at a
certain rate.  \Cref{fig:model:overall} indicates this abstraction by
D/C and C/D converters.  Although we chose to apply this abstraction
to the generation of $\vec{f}$, in practice, this will happen inside
the thrusters where, e.g.~digital signals control a servo motor of a
combustion engine and a rudder to generate $\vec{f}$.

\subsubsection*{Simulating the Model}

We implemented the \ac{AMV} model in a simple integrator-based
simulator in plain GNU/Octave.  For that, we derived parameters, such as
weight, maximum speed and propulsive force, from the C-Worker~5
specification.  Additionally, we identified controller constants, such
as $k_p^l$, during simulation.
\Cref{fig:sim} shows a trajectory of the \ac{AMV}~(green dot) turning around to
make its way to the next waypoint $W$~(blue dot) while circumventing
a floating obstacle~(pink dot).  The initial state
$\vec{x}_0\in\mathcal{X}$ is set to
\[
  \vec{x}_0 = [
  \underbrace{-.5,-3.8}_{\vec{v}_A [m/s]},
  \underbrace{-10,-10}_{\vec{p}_A [m]},
  \underbrace{0,0}_{\vec{v}_{O_1} [m/s]},
  \underbrace{-12,-18}_{\vec{p}_{O_1} [m]},\dots
  ]^T
\]
and the simulation run for the constants $\mathit{rs}=4 m/s$,
$\vec{p}_W = [0,0]^T$, and the time interval $T=[0,35]$ sec.  Note, the 2D
trajectory from the \ac{AMV} exhibits a deviation from its
course where $\mathit{oCC}$ turned true.  The lower middle graph shows
this as the event of $d_{sb}\approx 13 > |\overline{AO}|$ where the
magenta curve touches the red curve at $t\approx 5s$, simultaneous to
the reduction of $\norm{\vec{v}_A}$ after the switch to
$\mathit{CAM}$~(cf.~top right graph).

\begin{figure}
  \includegraphics[width=\columnwidth]{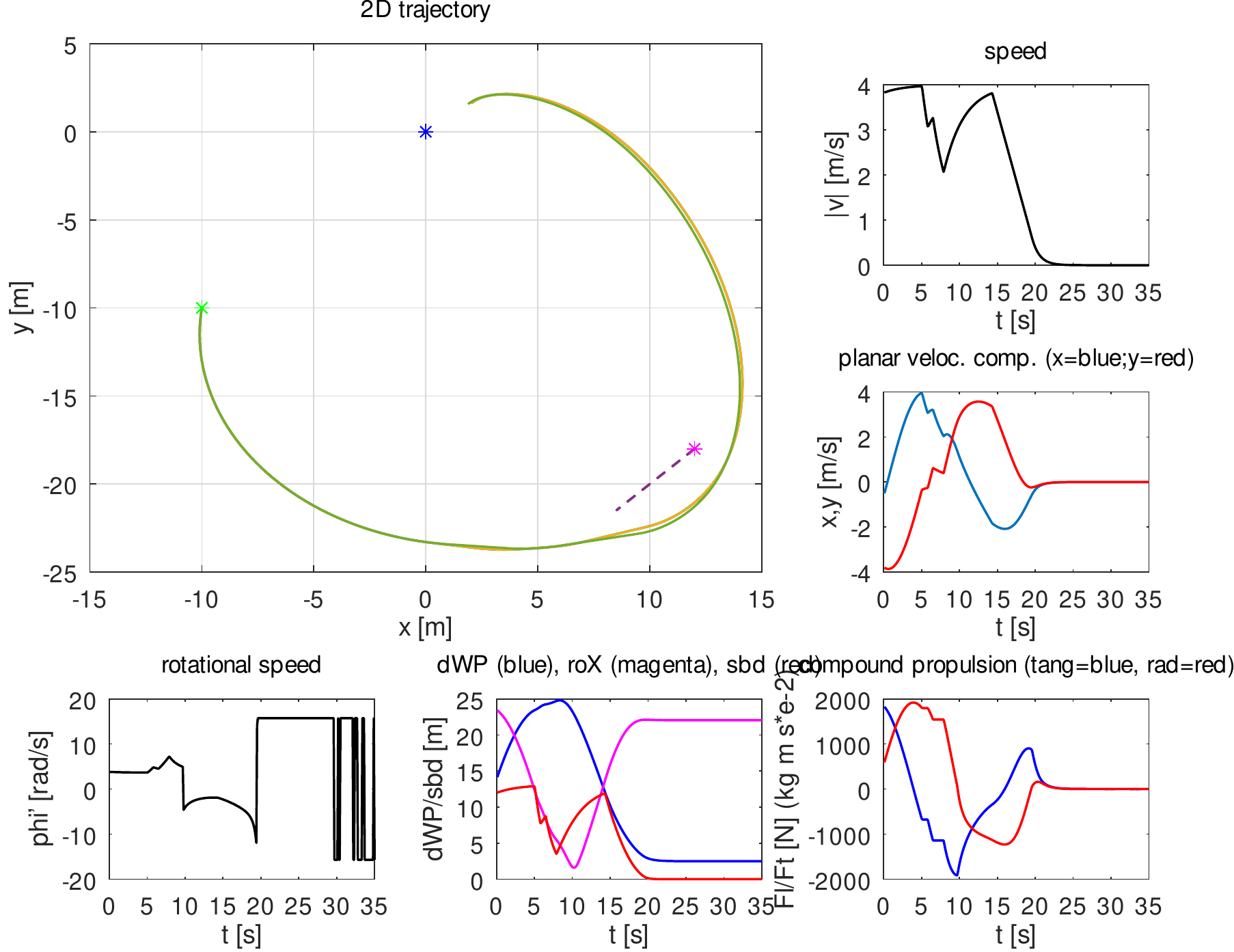}
  \caption{Simulation: the \ac{AMV}~(green dot) approaching next
    waypoint~(blue dot) while crossing an obstacle~(pink dot)
    \label{fig:sim}}
  \vspace{-3ex}
\end{figure}

\subsubsection*{Beyond Simulation}

Our quest for covering the input space~(\Cref{sec:introduction}) requires us to ask how we can know that from wherever in $\mathcal{X}$ we start,
wherever an obstacle is, in whatever interval $T$ we evaluate a
trajectory, will the \ac{AMV} always steer away from an obstacle in
\ac{CAM}, will it always reduce speed in \ac{HCM}, will it reach the
next waypoint within a given time in \ac{MOM}?  Such questions require
a more fundamental investigation of the model discussed in the next
section.

\subsection{Verification}

In order to support deductive verification of \acp{AMV}, we apply Isabelle/UTP to prove properties of the controller and system dynamics. For this, we utilise \isabelledH which integrates \dH~\cite{Munive2018-DDL,Foster2020-dL} into Isabelle/UTP. We extend it with matrices, discrete variables, and modular reasoning. Along with standard Hoare logic laws, \isabelledH includes the key rules from \dL, including differential induction and cut, which are at the core of our verification approach. We prove these laws as theorems of Hoare logic in Isabelle/UTP\footnote{We omit the proofs for reasons of space. They can be found in our repository (\url{https://github.com/isabelle-utp/utp-main}) and accompanying ~\includegraphics[width=9pt]{isabelle_transparent}~ links.}.

\begin{theorem}{Differential Induction and Cut} \label{thm:dI} \isalink{https://github.com/isabelle-utp/utp-main/blob/cfcac3847198564f09a99931a74543d8ca64ce11/theories/hyprog/utp_hyprog_dinv.thy\#L303}
\begin{align}
& ~~~ \dfrac{\textit{differentiable}(P) \land (B \Rightarrow
\mathcal{L}_{F}(P))}{\hoaretriple{P}{\langle \dot{\vec{x}} = F(\vec{x}) ~|~ B(\vec{x})\rangle}{P}}
\label{eq:dI}\\[1ex]
& \dfrac{\hoaretriple{P}{\langle F ~|~ B \rangle}{P} ~ \hoaretriple{Q}{\langle F ~|~ B \land P \rangle}{Q}}{\hoaretriple{P \land Q}{\langle F ~|~ B \rangle}{Q}} \label{eq:dC}
\end{align}

\end{theorem}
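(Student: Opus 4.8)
The plan is to prove both rules directly from the denotational semantics of the guarded continuous evolution $\langle \dot{\vec{x}} = F(\vec{x}) ~|~ B \rangle$. I read the triple $\hoaretriple{P}{\langle F ~|~ B \rangle}{Q}$ as: for every trajectory $\phi : [0,\tau] \to \mathcal{X}$ solving $\dot{\vec{x}} = F(\vec{x})$ all of whose points satisfy the evolution-domain constraint $B$, if $\phi(0) \models P$ then $\phi(\tau) \models Q$. Two facts underpin the whole argument: first, any such $\phi$ is differentiable on $[0,\tau]$ because it solves the ODE; and second, every point $\phi(t)$ is itself the endpoint of an admissible evolution, since the restriction $\phi|_{[0,t]}$ is again a solution staying in $B$. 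Thus any invariance certified by a triple holds at every intermediate time, not only at $\tau$.

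For differential induction \eqref{eq:dI}, I would fix a $\phi$ with $\phi(0) \models P$ and $\phi(t) \models B$ throughout, and take the atomic case $P \equiv e \geq 0$, studying the real map $g(t)$ recording the value of $e$ at $\phi(t)$. The hypothesis $\textit{differentiable}(P)$ together with the chain rule yields that $g$ is differentiable with $g'(t) = (\mathcal{L}_F e)(\phi(t))$, where $\mathcal{L}_F e = \nabla e \cdot F$ is the Lie derivative of $e$ along the field $F$; this identity is precisely what the operator $\mathcal{L}_F$ encodes. The second hypothesis $B \Rightarrow \mathcal{L}_F(P)$, which here reads $B \Rightarrow \mathcal{L}_F e \geq 0$, then forces $g'(t) \geq 0$ for each $t$, since $\phi(t) \models B$. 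By the Mean Value Theorem from Isabelle's analysis library, $g$ is non-decreasing on $[0,\tau]$, so $g(\tau) \geq g(0) \geq 0$, i.e.\ $\phi(\tau) \models P$. For a compound $P$ I would repeat this on each atomic comparison, relying on $\mathcal{L}_F$ distributing over the Boolean structure of $P$.

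For differential cut \eqref{eq:dC}, I would fix a trajectory $\phi$ staying in $B$ with $\phi(0) \models P \land Q$. Applying the first premise $\hoaretriple{P}{\langle F ~|~ B \rangle}{P}$ to each prefix $\phi|_{[0,t]}$ — admissible by the restriction property above — yields $\phi(t) \models P$ for every $t \in [0,\tau]$, so $\phi$ in fact stays within the strengthened domain $B \land P$ throughout its whole length. The second premise $\hoaretriple{Q}{\langle F ~|~ B \land P \rangle}{Q}$ now applies to $\phi$ viewed as a $(B \land P)$-trajectory with $\phi(0) \models Q$, and delivers $\phi(\tau) \models Q$, which is exactly the conclusion $\hoaretriple{P \land Q}{\langle F ~|~ B \rangle}{Q}$.

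I expect the main obstacle to lie entirely in the induction rule, specifically in formalising the chain-rule identity $g'(t) = (\mathcal{L}_F e)(\phi(t))$ so that the syntactic Lie-derivative operator $\mathcal{L}_F$ provably coincides with the time derivative of the transported predicate. Discharging the differentiability side-conditions is the delicate part: the state is vector-valued, the terms occurring in our model include transcendental functions ($\sin$, $\cos$, norms) whose derivatives must be supplied to the library's differentiation machinery, and the argument must be threaded through the Boolean structure of $P$. By contrast, the cut rule is essentially a reindexing argument once the prefix-restriction property is in place, and should follow with little further analytic machinery.
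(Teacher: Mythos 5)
Your proposal is correct and takes essentially the same route as the paper's own proof, which is omitted in-text and deferred to the Isabelle/UTP mechanization: there, \eqref{eq:dI} is likewise established by structural induction on $P$, with the chain rule identifying $\mathcal{L}_F(e)$ with the time derivative of $e$ transported along the solution trajectory and a mean-value/monotonicity lemma from HOL-Analysis closing the atomic cases, while \eqref{eq:dC} is proved semantically via exactly your prefix-restriction observation. The one point you should state more carefully is the Boolean lifting: soundness requires $\mathcal{L}_F(P \lor Q) = \mathcal{L}_F(P) \land \mathcal{L}_F(Q)$ together with a case split on which disjunct holds initially---literal distribution to $\mathcal{L}_F(P) \lor \mathcal{L}_F(Q)$ would be unsound, as the constant field $\dot{x} = -1$, $\dot{y} = 1$ with $P \equiv x \ge 0$ and $Q \equiv y \ge 0$ shows (the disjunctive derivative condition holds everywhere, yet the trajectory from $(1,-5)$ exits $P \lor Q$).
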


\noindent Here, $\langle \dot{\vec{x}} = F(\vec{x}) ~|~ B(\vec{x})\rangle$ is a system of ODEs with an evolution domain $B$. The dynamical system is permitted to evolve provided that the ODEs in $F$, and predicate $B$, are satisfied for all points on the solution trajectory. \eqref{eq:dI} states that if $P$ is everywhere differentiable, and its differentiated form follows from $B$, then $P$ is an invariant. \eqref{eq:dC} shows that if we can prove that $P$ is invariant, then we can use it as an axiom of the dynamics to prove that $Q$ is also an invariant~\cite{Platzer2008}. 

We extend \cite{Foster2020-dL} with support for automation of Lie derivatives~\cite{Ghorbal2014-AlgInv} evaluation.
The ODEs are encoded as a vector field $F : \mathbb{R}^n \to \mathbb{R}^n$. $\mathcal{L}_F(P)$ denotes the Lie derivative of the predicate $P$ along $F$. $P$ is restricted to the form $e~R~f$, for $R \in \{=, \le, <\}$, over differentiable expressions $e, f : \mathbb{R}^n \to \mathbb{R}$, and their conjunctions and disjunctions, such as $2x \le 5$. We exemplify $\mathcal{L}$ below. \isalink{https://github.com/isabelle-utp/utp-main/blob/2152a303f6a517f125a44c7709010b00e6e2b0a8/theories/hyprog/utp_hyprog_deriv.thy\#L113}
\begin{align}
\mathcal{L}_F(e \le f) &= (\mathcal{L}_F(e) \le \mathcal{L}_F(f)) \\
\mathcal{L}_F(e + f)   &= \mathcal{L}_F(e) + \mathcal{L}_F(f) \\
\mathcal{L}_F(e \cdot f)   &= \mathcal{L}_F(e) \cdot f + e \cdot \mathcal{L}_F(f) \label{eq:prodrule} \\
\mathcal{L}_F(\sin(e))  &= \mathcal{L}_F(e) \cdot \cos(e) \\
\mathcal{L}_F(x)       &= (\lambda s : \mathbb{R}^n.\, \textit{\textsf{get}}_x~F(s)) \label{eq:lenscvar} %
\end{align}
\noindent These are largely standard, such as the product rule~\eqref{eq:prodrule}. Of note, \eqref{eq:lenscvar} shows the treatment of a continuous variable $x : \mathbb{R}$. We encode mutable variables using lenses~\cite{Foster07}, which are pairs $$(x : V \Longrightarrow S) \triangleq (\textit{\textsf{get}}_x : S \to V, \textit{\textsf{put}}_x : S \to V \to S)$$ for some suitable state space $S$ and variable type $V$, that obey intuitive algebraic laws~\cite{Foster2020-IsabelleUTP}. Here, we require that every continuous variable has a bounded linear $\textit{\textsf{get}}$ function, which is satisfied, for example, when $x$ is a projection of a Euclidean space. The derivative of $x$ is an expression that applies the $\textit{\textsf{get}}$ function to the derivative of the state ($F(s)$). This can be seen as a semantic substitution of $x$ by its derivative~\cite{Foster2020-IsabelleUTP}. %

Invariants can contain transcendental functions such as $\sin$ and $log$. We also support equalities between arbitrary Euclidean spaces, 
such as matrices. To close the gap between Octave and Isabelle, we have implemented a smart matrix parser. A matrix in Isabelle is represented by a function: $A~\textit{mat}[M, N] \triangleq N \to M \to A$, where $N$ and $M$ are finite types denoting the dimensions, and $A$ is the element type, usually $\mathbb{R}$. Thus, the Isabelle type system can be used to ensure that matrix expressions are well-formed. We use the syntax $$\left[[x^1_1, x^1_2, \cdots x^1_n], \cdots, [x^m_1, x^m_2, \cdots, x^m_n]\right]$$ to represent a $m$ by $n$ matrix in Isabelle, which is a list of lists. Our parser can infer the dimensions of a well-formed matrix, and produce suitable dimension types, which aids proof. Moreover, we have proved theorems that allow symbolic evaluation of certain vector operations, for example:
\begin{align*}
    [x_1, y_1] + [x_2, y_2] &= [x_1 + x_2, y_1 + y_2] \\
    n \cdot [x, y] &= [n \cdot x, n \cdot y]
\end{align*}
\noindent We also define the matrix lens \isalink{https://github.com/isabelle-utp/utp-main/blob/cfcac3847198564f09a99931a74543d8ca64ce11/theories/hyprog/utp_hyprog_prelim.thy\#L51} $$\textit{mat-lens}(m : M, n : N) : A \Longrightarrow A~\textit{mat}[M, N]$$ which accesses an element. With it, we can model both variables that refer to an entire matrix and also its elements.

We have developed a tactic in \isabelledH called $\textit{dInduct}$, which automates the application of Theorem~\ref{thm:dI} by determining whether $P$ is indeed differentiable everywhere, and if so applying differentiation and substitution. The resulting predicate can be discharged, or refuted, using Isabelle's tactics.

Hybrid systems in \isabelledH follow the pattern of $Sys \triangleq (Ctrl \mathrel{\hbox{\rm;}} Dyn)^\star$, where the controller and dynamics iteratively take turns in updating the variables~\cite{Mitsch2017Obstacle}. Proving a safety property $P$ of $Sys$ entails finding an invariant $I$ both of $Ctrl$ and $Dyn$, such that $I \Rightarrow P$. \isabelledH splits the state space of a hybrid system into its continuous and discrete variables. %
Continuous variables change during evolution,
but discrete variables are constant and updated only by assignments.

The continuous state space ($\Sigma_C$) must form a Euclidean space, and so is typically composed of reals, vectors, and matrices. There are no restrictions on the discrete state space, and it may use any Isabelle data type. 
\subsubsection*{Case Study}
We describe each of the continuous variables using lenses, e.g. $\vec{p}, \vec{v} : \mathbb{R}~\textit{mat}[1,2] \Longrightarrow \Sigma_C$\footnote{We omit the $A$ subscripts for brevity.}. In addition to those mentioned in \S\ref{sec:casmodel}, we also include $\vec{a}: \mathbb{R}~\textit{mat}[1,2]$ for the acceleration, and $s : \mathbb{R}$, for the linear speed. Technically, $s$ can be derived as $\norm{v}$, but its inclusion makes proving invariants easier. Most are monitored variables of the environment, except $\vec{a}$, which is updated by the \ac{AP}. The discrete variables include
waypoint location ($\vec{wp} : \mathbb{R}^2$);
obstacle set ($ob : \mathbb{P}(\mathbb{R}^2)$);
linear speed and heading set points ($rs, rh : \mathbb{R}$);
force vector ($\vec{f} : \mathbb{R}^2$); and mode ($m: \{\textit{OCM}, \textit{MOM}, \textit{HCM}, \textit{CAM}\}$).
Next, we describe the dynamics.

\begin{definition}[AMV Dynamics] \label{def:amv-dyn} \isalink{https://github.com/isabelle-utp/utp-main/blob/2152a303f6a517f125a44c7709010b00e6e2b0a8/theories/hyprog/examples/AMV.thy\#L257}
\begin{align*}
    dyn_{AV} &\triangleq
    \left(\begin{array}{l}
        \dot{t} = 1;\ \dot{\vec{p}} = \vec{v};\ \dot{\vec{v}} = \vec{a};\ \dot{\vec{a}} = 0;\\
        \dot{s} = \conditional{\dfrac{\vec{v} \cdot \vec{a}}{s}}{s \neq 0}{\norm{\vec{a}}}; \\
        \dot{\phi} = \conditional{acos\left(\dfrac{(\vec{v} + \vec{a}) \cdot \vec{v}}{\norm{\vec{v} + \vec{a}} \cdot \norm{\vec{v}}}\right)}{s \neq 0}{0} 
    \end{array}\right) \\[1ex]
    ax_{AV} &\triangleq
        \left(\begin{array}{l}
            0 \le s \land s \le S \land s \cdot 
            \begin{bmatrix} \sin(\phi) \\ \cos(\phi) \end{bmatrix} = \vec{v} \land t < \epsilon
        \end{array}\right) \\[1ex]
    Dyn &\triangleq t := 0 \mathrel{\hbox{\rm;}} \left\langle dyn_{AV} ~|~ ax_{AV} \right\rangle
\end{align*}
\end{definition}

\noindent Dynamics $dyn_{AV}$ is a system of six ODEs. For the linear speed derivative, we consider the special case when $s = 0$, where the speed derivative is derived from $\vec{a}$, and the rotational speed is 0. We also axiomatise some properties of the dynamics in $ax_{AV}$: (1) the linear speed must be in $[0, S]$; (2) $\vec{v}$ must be the same as $s$ multiplied by the orientation unit vector; (3) time must not advance beyond $\epsilon$, which puts an upper bound on the time between control decisions. 

Next, we model the LRE, which is encoded as a set of guarded commands derived from Figure~\ref{fig:model:ha}. In each iteration, the LRE updates state variables and can transition to a different state. Whilst in MOM, the speed set point is the maximum speed (S), and the LRE invokes the command \textit{steerToWP} that updates the heading towards the current way point. 

\begin{definition}[Simplified LRE] $LRE \triangleq$ \isalink{https://github.com/isabelle-utp/utp-main/blob/2152a303f6a517f125a44c7709010b00e6e2b0a8/theories/hyprog/examples/AMV.thy\#L278}
$$
\left(\begin{array}{l}
m = \textit{MOM} \rightarrow 
    \left(\!\begin{array}{l}
        rs := S \mathrel{\hbox{\rm;}} \textit{steerToWP} \mathrel{\hbox{\rm;}} \\
        \textbf{if}~\textit{oCC} ~ \textbf{then}~ m := \textit{CAM}~\textbf{fi} \mathrel{\hbox{\rm;}} \\
        \textbf{if}~\exists \vec{o} \in ob. \norm{\vec{o} - \vec{p}} \le D \\
        \textbf{then}~ m := \textit{HCM} \mathrel{\hbox{\rm;}} rs := H ~\textbf{fi}
    \end{array}\right) \\[3ex]
m = \textit{HCM} \rightarrow 
    \left(\begin{array}{l}
        rs := H \mathrel{\hbox{\rm;}} \textit{steerToWP} \mathrel{\hbox{\rm;}} \\
        \textbf{if}~\forall \vec{o} \in ob. \norm{\vec{o} - \vec{p}} > D \\
        \textbf{then}~ m := \textit{MOM} ~\textbf{fi}
    \end{array}\right) \\[3ex]
m = \textit{OCM} \rightarrow \textbf{skip} \\
m = \textit{CAM} \rightarrow \cdots
\end{array}\!\right)
$$
\end{definition}

\noindent If \ac{oCC} is detected the LRE transitions to CAM. If 
\ac{nO} holds but not \ac{oCC},
then the LRE switches to HCM. From HCM, the speed set point is decreased to
$H$. Once the AMV is no longer close to an obstacle, the LRE may return to MOM. OCM exhibits no behaviour since the operator provides the control inputs. Finally, CAM is where collision avoidance procedures are executed. Its behaviour is left unspecified for now. 
The final component we model is the \acl{AP}.

\begin{definition}[Autopilot Controller] $AP \triangleq$ \isalink{https://github.com/isabelle-utp/utp-main/blob/2152a303f6a517f125a44c7709010b00e6e2b0a8/theories/hyprog/examples/AMV.thy\#L291}
$$
\begin{array}{l}
\textbf{if} \norm{rs - s} > s_\epsilon \\
\textbf{then}~ ft := \mathrm{sgn}(rs - s) \cdot \min\left(\!\begin{array}{c}kp_{gv} \cdot \norm{rs - s}, \\ f_{max}\end{array}\!\right) \\
\textbf{else}~ ft := 0 ~\textbf{fi} \mathrel{\hbox{\rm;}} \\
\textbf{if} \norm{rh - \phi} > \phi_\epsilon \\
\textbf{then}~ fl := \mathrm{sgn}(rh - \phi) \cdot \min\left(\!\begin{array}{c}kp_{gr} \cdot \norm{rh - \phi}, \\ f_{max}\end{array}\!\right) \\
\textbf{else}~ fl := 0 ~\textbf{fi} \mathrel{\hbox{\rm;}} \\
\vec{f} := 
    fl \cdot 
    \begin{bmatrix} 
        \cos(\phi) \\ 
        \sin(\phi)
    \end{bmatrix} +
    ft \cdot 
    \begin{bmatrix} 
        \sin(\phi) \\ 
        \cos(\phi)
    \end{bmatrix} \mathrel{\hbox{\rm;}} \vec{a} := \vec{f} / m 
\end{array}
$$

\end{definition}

\noindent The \ac{AP} takes $rs$ and $rh$ as inputs, computes $\vec{f}$, and calculates $\vec{a}$. The constants $s_{\epsilon}$ and $\phi_{\epsilon}$ limit the controller activity when the speed is close to the set point. Its representation in Isabelle/UTP is shown in Figure~\ref{fig:AP-Isabelle}.
Continuous variables are distinguished using the namespace $\mathbf{c}$, e.g.~$\textbf{c}{:}x$.
Scalar multiplication and division are distinguished operators, $n *_R \vec{x}$ and $n \mathop{/_R} \vec{x}$. Finally, we describe the overall AMV behaviour.

\begin{figure}
    \centering
    \includegraphics[width=\linewidth]{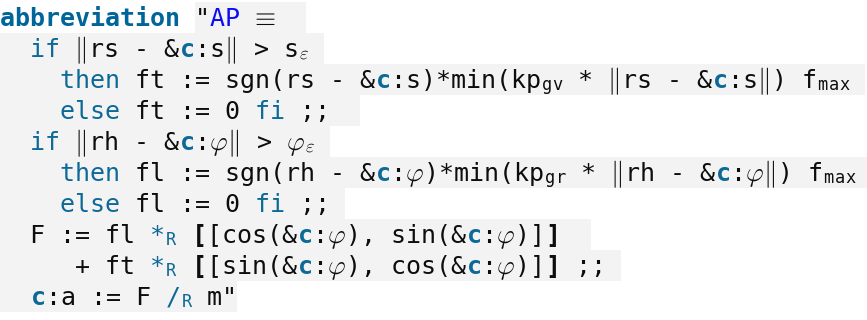}

    \vspace{-1ex}
    \caption{Autopilot in Isabelle/UTP}
    \label{fig:AP-Isabelle}
    
    \vspace{-3ex}
\end{figure}

\begin{definition} $AMV \triangleq (LRE \mathrel{\hbox{\rm;}} AP \mathrel{\hbox{\rm;}} Dyn)^\star$
\end{definition}

\noindent The LRE executes first to determine the new speed set points. Following this, the autopilot calculates the new acceleration vector. Finally, the dynamical system evolves the continuous variables for up to $\epsilon$ seconds, and then the cycle begins again. We will now proceed to verify some properties of the system using \isabelledH. We begin with some structural properties.

\begin{theorem}[Structural Properties] \label{thm:structp} $ $ \isalink{https://github.com/isabelle-utp/utp-main/blob/2152a303f6a517f125a44c7709010b00e6e2b0a8/theories/hyprog/examples/AMV.thy\#L308}
\begin{itemize}
    \item $LRE \mathop{\,\textit{\textbf{nmods}}\,} \{t, \vec{p}, \vec{v}, \vec{a}, s, \phi\}$
    \item $AP \mathop{\,\textit{\textbf{nmods}}\,} \{t, \vec{p}, \vec{v}, s, \phi\}$
    \item $Dyn \mathop{\,\textit{\textbf{nmods}}\,} \{\vec{wp}, ob, rs, rh, ft, fl, \vec{f}, m\}$
\end{itemize}
\end{theorem}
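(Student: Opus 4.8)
The plan is to treat $\mathop{\textit{\textbf{nmods}}}$ as a compositional frame predicate and prove each of the three facts by structural decomposition of the program, reducing everything to two kinds of atomic side condition: that the target lens of each assignment is independent of every variable in the frame set, and that the continuous-evolution operator leaves the discrete substate fixed. The generic laws I would rely on are: $\mathbf{skip}$, and any assignment $x := e$, does not modify a set $A$ whenever $x$ is lens-independent of every $y \in A$; and if $P$ and $Q$ each do not modify $A$, then so do $P \mathrel{\hbox{\rm;}} Q$, any conditional or guarded-command combination of them, and the iteration $P^\star$. With these in hand, the proof of each clause is a matter of peeling off the program structure and checking a finite list of independence obligations.

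For the first two clauses I would unfold the definitions of $LRE$ and $AP$ and push $\mathop{\textit{\textbf{nmods}}}$ through the sequential compositions, the guarded commands, and the inner conditionals using the laws above. The essential observation is that $LRE$ writes only the discrete variables $m$ and $rs$, together with whatever $\textit{steerToWP}$ updates (the discrete heading set point $rh$), so every assignment target is disjoint from the continuous frame $\{t, \vec{p}, \vec{v}, \vec{a}, s, \phi\}$; and $AP$ writes only $ft$, $fl$, $\vec{f}$, and $\vec{a}$, so every target is disjoint from $\{t, \vec{p}, \vec{v}, s, \phi\}$ --- note that $\vec{a}$ is written by $AP$ and is therefore correctly absent from its frame set. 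Each remaining subgoal is a lens-independence fact: a discrete target against a continuous variable (immediate from the product decomposition of the state into $\Sigma_C$ and the discrete state), or a pair of distinct continuous projections such as $\vec{a}$ against $\vec{v}$ (immediate because they are distinct components of $\Sigma_C$). All of these I would discharge automatically with the framework's lens simplifier.

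For the third clause I would split $Dyn = t := 0 \mathrel{\hbox{\rm;}} \langle dyn_{AV} ~|~ ax_{AV}\rangle$ and handle the two pieces separately. The assignment $t := 0$ writes only $t$, which is continuous and hence independent of the discrete frame $\{\vec{wp}, ob, rs, rh, ft, fl, \vec{f}, m\}$. For the evolution, the key lemma is that $\langle F ~|~ B \rangle$ modifies only the continuous substate: since \isabelledH separates the state into a Euclidean continuous part $\Sigma_C$ and a discrete part, the flow acts within $\Sigma_C$ and holds every discrete lens fixed. The conditional right-hand sides in $dyn_{AV}$ (the $s \neq 0$ case splits for $\dot{s}$ and $\dot{\phi}$, and the $acos$ and division terms) are only expressions defining the vector field, not control flow, so they are irrelevant to which variables are written and do not complicate the frame argument.

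I expect the main obstacle to be this last lemma: establishing, at the semantic level, that continuous evolution does not touch any discrete variable. The compositional laws and the many lens-independence checks are mechanical, but pinning down that the frame of $\langle F ~|~ B \rangle$ is exactly the continuous lens requires unfolding the definition of the evolution operator and appealing to the product structure that guarantees the discrete projection is invariant along the flow. Once that single lemma is available, the three clauses follow by routine structural reasoning and automated discharge of the independence obligations.
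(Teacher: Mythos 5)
Your proposal is correct and matches the paper's route: the paper discharges these facts automatically in Isabelle via exactly the compositional frame reasoning you describe --- pushing $\mathop{\textit{\textbf{nmods}}}$ through assignments, sequencing, and guarded commands down to lens-independence side conditions --- and your ``key lemma'' that $\langle F ~|~ B \rangle$ holds the discrete substate fixed is precisely the framework property the paper states (``discrete variables are constant and updated only by assignments''), built into the \isabelledH evolution operator over $\Sigma_C$. You also correctly identify the one non-symmetric detail, namely that $\vec{a}$ is written by $AP$ and so is absent from its frame set, so nothing is missing.
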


\noindent $P \mathop{\,\textit{\textbf{nmods}}\,} A$ means that $P$ does not modify the variables in $A$. It enables modular verification using the following theorem:
$$S \mathop{\,\textit{\textbf{nmods}}\,} \vec{x} \implies \hoaretriple{p(\vec{x})}{S}{p(\vec{x})}$$ If $\vec{x}$ is not modified by $S$, then any predicate in $\vec{x}$ is invariant. We can verify that the LRE does not modify any of the continuous variables, as it only updates the (discrete) set points. The \ac{AP} modifies only the continuous variable $\vec{a}$. We can prove that $ax_{AV}$ is an invariant of both $LRE$ and $AP$, and therefore of the entire system. The dynamics can potentially change any of the continuous variables, but does not change any of the discrete variables. These structural properties are automatically proved, and are useful to ensure structural well-formedness of a controller under development.

We next prove some invariants of the system using \dH.

\begin{theorem}[Collinearity of $\vec{v}$ and $\vec{a}$] \isalink{https://github.com/isabelle-utp/utp-main/blob/2152a303f6a517f125a44c7709010b00e6e2b0a8/theories/hyprog/examples/AMV.thy\#L452} \label{thm:collinear}
$$\hoaretriple{\vec{a} \cdot \vec{v} = \norm{\vec{a}} \cdot \norm{\vec{v}}}{Dyn}{\vec{a} \cdot \vec{v} = \norm{\vec{a}} \cdot \norm{\vec{v}}}$$
\end{theorem}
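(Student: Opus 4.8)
The plan is to reduce the Hoare triple to a single application of differential induction, equation~\eqref{eq:dI}. Since the leading assignment $t := 0$ cannot affect the predicate $P \equiv (\vec{a} \cdot \vec{v} = \norm{\vec{a}} \cdot \norm{\vec{v}})$, which does not mention $t$, sequential composition leaves the real work to the continuous fragment $\langle dyn_{AV} ~|~ ax_{AV}\rangle$. The governing observation is physical: because $\dot{\vec{a}} = 0$ the acceleration is constant along every trajectory, and $\dot{\vec{v}} = \vec{a}$ gives $\vec{v}(\tau) = \vec{v}(0) + \tau\vec{a}$; so if $\vec{v}(0)$ is a non-negative multiple of $\vec{a}$ it stays one for $\tau \ge 0$. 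The task is to capture this with a Lie-derivative argument against the evolution domain $ax_{AV}$.

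First I would compute $\mathcal{L}_F$ of each side. Using $\dot{\vec{a}} = 0$ and $\dot{\vec{v}} = \vec{a}$, the product rule~\eqref{eq:prodrule} gives $\mathcal{L}_F(\vec{a} \cdot \vec{v}) = \vec{a} \cdot \vec{a} = \norm{\vec{a}}^2$, whereas $\mathcal{L}_F(\norm{\vec{a}} \cdot \norm{\vec{v}}) = \norm{\vec{a}}\cdot \tfrac{\vec{v} \cdot \vec{a}}{\norm{\vec{v}}}$ (the $\norm{\vec{a}}$ factor has zero derivative). Matching these, as the equality form of~\eqref{eq:dI} would demand, requires $\norm{\vec{a}}\,(\norm{\vec{a}}\norm{\vec{v}} - \vec{v} \cdot \vec{a}) = 0$, which holds only modulo $P$ itself. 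In other words the raw Lie derivative of $e - f$ is a cofactor multiple of $e - f$ (a Darboux situation) rather than identically zero, so rule~\eqref{eq:dI} does not apply to the equality directly.

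The key step is therefore to reformulate the equality as the single nontrivial inequality $\norm{\vec{a}}\,\norm{\vec{v}} \le \vec{a} \cdot \vec{v}$, the reverse inequality being Cauchy--Schwarz and hence free. I would bridge the two with the consequence rule (the precondition $=$ entails $\ge$ on entry; the postcondition $\ge$ entails $=$ on exit, since the missing direction is Cauchy--Schwarz) and prove that the inequality is an invariant of $\langle dyn_{AV} ~|~ ax_{AV}\rangle$ by~\eqref{eq:dI}. Its Lie-derivative obligation is $\norm{\vec{a}}\tfrac{\vec{v} \cdot \vec{a}}{\norm{\vec{v}}} \le \norm{\vec{a}}^2$, which after cancelling the non-negative factor $\norm{\vec{a}}$ is exactly $\vec{v} \cdot \vec{a} \le \norm{\vec{a}}\norm{\vec{v}}$ --- Cauchy--Schwarz, valid throughout $ax_{AV}$.

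The main obstacle is the differentiability premise of~\eqref{eq:dI}, because the Euclidean norm is non-smooth at the origin, and $ax_{AV}$ permits $s = \norm{\vec{v}} = 0$ while nothing forbids $\vec{a} = 0$. I would dispatch the $\vec{v} = 0$ singularity by exploiting the domain equation $s\cdot[\sin\phi,\cos\phi] = \vec{v}$ with $s \ge 0$ to replace $\norm{\vec{v}}$ by the smooth state variable $s$, whose derivative $\dot{s}$ is defined even at $s = 0$, where it equals $\norm{\vec{a}}$ and makes the obligation hold with equality. The $\vec{a} = 0$ singularity I would dispatch by the clean case split afforded by $\vec{a}$ being constant: either $\vec{a} \equiv \vec{0}$, whereupon $P$ reads $0 = 0$ and holds trivially, or $\vec{a} \ne \vec{0}$ throughout, so $\norm{\vec{a}}$ is differentiable along the trajectory and the argument above goes through. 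Differential cut~\eqref{eq:dC} is available to stage these facts, although $ax_{AV}$ already carries $s = \norm{\vec{v}}$ and $s \ge 0$ as part of the evolution domain.
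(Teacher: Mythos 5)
Your diagnosis of why raw differential induction fails on the equality (the Lie derivative of the difference is a cofactor multiple of the difference, a Darboux situation) is correct, and your reduction via the consequence rule to the single inequality $\norm{\vec{a}}\,\norm{\vec{v}} \le \vec{a}\cdot\vec{v}$, with Cauchy--Schwarz supplying the reverse direction for free, is sound in principle and genuinely different from the paper. But the execution has a gap at exactly the point you flag as ``the main obstacle'': the differentiability premise of \eqref{eq:dI} is \emph{global} --- the rule (and the \textit{dInduct} tactic, which checks ``whether $P$ is indeed differentiable everywhere'') requires $P$ to be differentiable on the whole continuous state space, not merely along the trajectory at hand or on the evolution domain. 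Your substitution of the state variable $s$ for $\norm{\vec{v}}$ handles that occurrence cleanly (it is the right move: $s$ is a coordinate, trivially differentiable, and its conditional derivative is part of $F$), but no analogous state variable exists for $\norm{\vec{a}}$, and your case split does not repair this. A differential cut of $\vec{a}\cdot\vec{a} > 0$ via \eqref{eq:dC} only strengthens $B$, whereas the premise $\textit{differentiable}(P)$ is independent of $B$; and the trajectory-level constancy of $\vec{a}$ is invisible to a side condition imposed on the predicate as a function of state. So in the calculus as stated, $\norm{\vec{a}}\cdot s \le \vec{a}\cdot\vec{v}$ is not an admissible induction candidate, and your $\vec{a}\ne\vec{0}$ branch stalls.

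The standard repair is to eliminate the norms by squaring, and that is precisely the paper's proof: it establishes the two invariants $\vec{a}\cdot\vec{v} \ge 0$ and $(\vec{a}\cdot\vec{v})^2 = (\vec{a}\cdot\vec{a})\cdot(\vec{v}\cdot\vec{v})$ by \eqref{eq:dI} --- both polynomial, hence everywhere differentiable, with Lie obligations $0 \le \norm{\vec{a}}^2$ and the identity $2(\vec{a}\cdot\vec{v})(\vec{a}\cdot\vec{a}) = (\vec{a}\cdot\vec{a})\cdot 2(\vec{v}\cdot\vec{a})$, which hold with no case analysis on $\vec{a}$ or $s$ and no use of the domain --- and then closes by the algebraic equivalence of this conjunction with $\vec{a}\cdot\vec{v} = \norm{\vec{a}}\cdot\norm{\vec{v}}$. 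Your plan can be salvaged the same way: square your inequality to $(\vec{a}\cdot\vec{a})\cdot(\vec{v}\cdot\vec{v}) \le (\vec{a}\cdot\vec{v})^2$ and cut in $\vec{a}\cdot\vec{v} \ge 0$; this is the paper's decomposition with $\le$ in place of $=$, and Cauchy--Schwarz then upgrades it to the equality exactly as you intended.
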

\begin{proof}
We first prove that $\vec{a} \cdot \vec{v} \ge 0$ and $(\vec{a} \cdot \vec{v})^2 = (\vec{a} \cdot \vec{a}) \cdot (\vec{v} \cdot \vec{v})$ are both invariants by \cref{thm:dI}. We can then show these are equivalent with $a \cdot v = \norm{\vec{a}} \cdot \norm{\vec{v}}$.
\end{proof}
\noindent Collinearity means that $\vec{v}$ and $\vec{a}$ have the same direction and the AMV is travelling straight. A corollary is below. \isalink{https://github.com/isabelle-utp/utp-main/blob/2152a303f6a517f125a44c7709010b00e6e2b0a8/theories/hyprog/examples/AMV.thy\#L487}
$$\hoaretriple{\!
    \vec{a} \cdot \vec{v} = \norm{\vec{a}}\!\cdot\!\norm{\vec{v}}\!
    }{Dyn}{\vec{p} = \frac{t^2}{2} \cdot \vec{a} + t\cdot old(\vec{v}) + old(\vec{p})}$$
This states that if the AMV is travelling in a straight line, then its position can be obtained through integration of $\dot{\vec{p}}$. By Theorem~\ref{thm:structp}, collinearity is also trivially an invariant of the $LRE$, since it does not modify any continuous variables.

\begin{figure}
    \centering
    \includegraphics[width=0.9\linewidth]{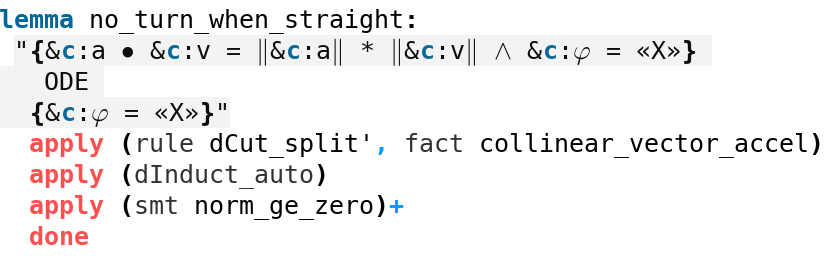}
    
    \vspace{-1ex}    
    \caption{Example proof in \isabelledH}
    \label{fig:proof_ex}
    
    \vspace{-3ex}
\end{figure}

We show proof of a further corollary in Figure~\ref{fig:proof_ex} in Isabelle/UTP: if the AMV is moving straight, then the 
heading is constant. We introduce a ghost variable $X$ for the current heading $\phi$. The proof proceeds by performing a differential cut (\textit{dCut\_split')}, which allows us to assume collinearity in the dynamical system. Theorem~\ref{thm:collinear} corresponds to the fact \textit{collinear\_vector\_accel} in Isabelle. Then, we use differential induction via the \textit{dInduct\_auto} tactic, which also applies algebraic simplification laws. Finally, we call \emph{sledgehammer} which provides SMT proofs to discharge the remaining proof obligation, which is essentially
$\vec{a} \cdot \vec{v} = \norm{\vec{a}}\!\cdot\!\norm{\vec{v}} \Rightarrow \dot{\phi} = 0$. This technique allows us to harness all the mathematical results proved in \textsf{HOL} and \textsf{HOL-Analysis} in our proofs~\cite{Harrison2005-Euclidean,Immler2012}.

Collinearity is established by $AP$ when the heading set point is the same as the actual heading:

\begin{theorem}[Autopilot Collinearity] \isalink{https://github.com/isabelle-utp/utp-main/blob/2152a303f6a517f125a44c7709010b00e6e2b0a8/theories/hyprog/examples/AMV.thy\#L369}
$$\hoaretriple{
    \begin{array}{c}
    0 \le s \land s \le rs \\
    \land \norm{rh - \phi} < \phi_{\epsilon} \\
    \land \vec{v} = s \cdot \begin{bmatrix} \sin(\phi) \\ \cos(\phi) \end{bmatrix}
    \end{array}}{AP}{\vec{a} \cdot \vec{v} = \norm{\vec{a}} \cdot \norm{\vec{v}}}$$
\end{theorem}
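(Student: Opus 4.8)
The plan is to treat this as an ordinary loop-free Hoare-logic goal rather than a differential one: $AP$ is a sequential composition of two conditionals and two assignments with no continuous evolution, so \cref{thm:dI} is not needed. I would compute the weakest precondition by backward substitution through the two final assignments $\vec{a} := \vec{f}/m$ and $\vec{f} := fl \cdot \begin{bmatrix}\cos(\phi)\\\sin(\phi)\end{bmatrix} + ft \cdot \begin{bmatrix}\sin(\phi)\\\cos(\phi)\end{bmatrix}$, turning the postcondition $\vec{a} \cdot \vec{v} = \norm{\vec{a}} \cdot \norm{\vec{v}}$ into a condition on $fl$, $ft$, $\phi$, $s$, and $\vec{v}$. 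This is exactly the shape of goal the paper's \textit{hoare-auto} tactic handles, so in Isabelle I would apply it and then discharge the residual verification condition by hand, appealing to \emph{sledgehammer} where convenient.

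The decisive structural step is that the precondition $\norm{rh - \phi} < \phi_{\epsilon}$ falsifies the guard $\norm{rh - \phi} > \phi_{\epsilon}$ of the heading conditional, forcing its else-branch and hence $fl = 0$. The force assignment then collapses to $\vec{f} = ft \cdot \begin{bmatrix}\sin(\phi)\\\cos(\phi)\end{bmatrix}$, so $\vec{a} = (ft/m) \cdot \begin{bmatrix}\sin(\phi)\\\cos(\phi)\end{bmatrix}$. Together with the precondition $\vec{v} = s \cdot \begin{bmatrix}\sin(\phi)\\\cos(\phi)\end{bmatrix}$, both $\vec{a}$ and $\vec{v}$ are scalar multiples of the same unit vector, whose norm is $1$ by $\sin^2(\phi) + \cos^2(\phi) = 1$. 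A short computation then gives $\vec{a} \cdot \vec{v} = (ft \cdot s)/m$, $\norm{\vec{a}} = |ft|/m$, and $\norm{\vec{v}} = |s|$.

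It remains to discharge two sign conditions. The norm identity $\norm{\vec{v}} = |s| = s$ uses $0 \le s$ from the precondition. More subtly, closing the oriented equality (rather than the weaker $|\vec{a} \cdot \vec{v}| = \norm{\vec{a}} \cdot \norm{\vec{v}}$) requires $ft \ge 0$, so that $|ft| = ft$; this is where $s \le rs$ is essential. In the then-branch $ft = \mathrm{sgn}(rs - s) \cdot \min(kp_{gv} \cdot \norm{rs - s}, f_{max})$, and $s \le rs$ makes $\mathrm{sgn}(rs - s) \ge 0$ while the $\min$ is nonnegative given positive constants $kp_{gv}$, $f_{max}$; in the else-branch $ft = 0$. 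Hence $ft \ge 0$ in both cases, and $\vec{a} \cdot \vec{v} = (ft \cdot s)/m = (|ft| \cdot |s|)/m = \norm{\vec{a}} \cdot \norm{\vec{v}}$ (using $m > 0$).

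I expect the main obstacle to be this sign bookkeeping rather than any deep reasoning: naive symbolic execution establishes collinearity only up to sign, and both $s \le rs$ (for the sign of $ft$) and $0 \le s$ (for $\norm{\vec{v}} = s$) must be fed into the verification condition to obtain the oriented equality. The case split on the speed-control conditional, showing $ft \ge 0$ uniformly across both branches, is the one place where \emph{sledgehammer} could stall unless the hypothesis $s \le rs$ and the positivity of the controller constants are explicitly available.
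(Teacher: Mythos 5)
Your proposal is correct and takes essentially the same route as the paper, whose entire proof is ``Hoare logic reasoning and vector arithmetic'' pivoting on the crucial fact $[\sin(\phi), \cos(\phi)] \cdot [\sin(\phi), \cos(\phi)] = 1$ --- exactly the unit-vector identity at the heart of your computation. Your extra bookkeeping (forcing $fl = 0$ via $\norm{rh - \phi} < \phi_{\epsilon}$, deriving $ft \ge 0$ from $s \le rs$, and using $0 \le s$ for $\norm{\vec{v}} = s$) is precisely the detail the paper leaves implicit in that one-line argument.
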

\begin{proof}
Hoare logic reasoning and vector arithmetic. A crucial fact is that $[\sin(\phi), \cos(\phi)] \cdot [\sin(\phi), \cos(\phi)] = 1$.
\end{proof}

\noindent The theorem shows that if the linear speed is between 0 and $rs$, $\phi$ and $rh$ are sufficiently close, and $\vec{v}$ can be derived from the linear speed and heading, then afterwards $\vec{a}$ and $\vec{v}$ are again collinear. Now, by the sequential composition law, we can compose this with \ref{thm:collinear} to obtain the same Hoare triple for $AP \mathrel{\hbox{\rm;}} Dyn$. Finally, we show a property of the LRE. \isalink{https://github.com/isabelle-utp/utp-main/blob/2152a303f6a517f125a44c7709010b00e6e2b0a8/theories/hyprog/examples/AMV.thy\#L419}
$$\hoaretriple{
    \!\begin{array}{c}
        m = MOM \land  \\
        \norm{\textit{ang}(\vec{wp} - \vec{p})-\phi} \le \phi_\epsilon \\
        \land (\exists \vec{o}\in ob. \norm{\vec{p} - \vec{o}} \le D)
    \end{array}\!}
    {LRE}{
    \!\begin{array}{c}
        m = HCM \\
        \land rs = H \land \\
         \norm{rh-\phi} \le \phi_\epsilon
    \end{array}\!}$$
This shows that if the LRE is in MOM, and the heading is currently towards the waypoint, but an obstacle is close, then it will transition to HCM, drop the set point speed to $H$ and the requested heading remains close to the actual heading.

\section{Discussion and Conclusion}
\label{sec:conclusion}

In this paper, we have made preliminary steps to integrating \ac{NC} with theorem proving in Isabelle. Octave and Isabelle's approaches to mathematics are, in many ways, quite different. Octave is focused on efficient \ac{NC}, whereas Isabelle is based on foundational mathematics and proof. %
Nevertheless, our investigation indicates that they can effectively be used together.

Most of the required Octave functions, such as, $\sin$, $\mathrm{sgn}$, and the vector operations are present in Isabelle, and are accompanied by a large body of theorems~\cite{Harrison2005-Euclidean,Immler2012,Immler2014}. The Archive of Formal Proofs\footnote{Archive of Formal Proofs. \url{https://www.isa-afp.org/}.} (AFP) has several useful libraries; for example, we used Manuel Eberl's library for calculating angles~\cite{Triangle-AFP}. Combining libraries with the flexible syntax of Isabelle, the program notation of Isabelle/UTP, and our matrix syntax, we can achieve a fairly direct translation of Octave functions, as \Cref{fig:AP-Isabelle} illustrates. Verification can be automated by the \textit{hoare-auto} tactic, though this dependends on arithmetic lemma libraries, some of which we needed to prove manually for the verification. Nevertheless, in our experience, \textit{sledgehammer}~\cite{Blanchette2011} performs quite well with arithmetic problems. Moreover, Isabelle has the approximation tactic, which can prove real and transcendental inequalities~\cite{Holzl2009-Approximate}. 

For the dynamics, it is necessary to produce an explicit system of first order ODEs, as shown in Definition~\ref{def:amv-dyn}. Consequently, any algebraic equations must be converted. For example, we could not include the value of $\phi$, but needed to give its derivative and include an axiom linking this with $s$ and $\vec{v}$. The challenge is finding invariants, and having sufficient background lemmas to prove the verification conditions.

There have been previous works on integrating \ac{NC} with deductive verification. Notably, Zhan et al. ~\cite{Zou2013-HHL} have used Hybrid CSP and an accompanying Hoare logic to verify Simulink block diagrams in Isabelle. Our work is more modest, in that we focus on sequential hybrid programs, but with a transparent translation and a high degree of automation. The dominant and most automated tool for hybrid systems deductive verification remains KeYmaera X~\cite{KeYmaeraX,Mitsch2017Obstacle}. Nevertheless, we believe that our preliminary results show the advantages of targeting Isabelle. Firstly, this allows integration of a variety of mathematical libraries to support reasoning. Secondly, we can combine notations with ODEs, and in the future aim to support refinement to code using libraries like Isabelle/C~\cite{Tuong2019-CIsabelle}. Thirdly, as illustrated by the obstacle register, with Isabelle/HOL we have the potential to extend \dL with additional features like collections as used in quantified \dL~\cite{Platzer2010-qDL}.

\bibliographystyle{IEEEtran}
\bibliography{}
\end{document}